\documentclass[10pt,a4paper,reqno]{amsart}
\usepackage{amsthm}
\usepackage{amsmath,mathtools}
\usepackage{amssymb}
\usepackage{graphicx,color}

\usepackage[font=small]{caption}
\usepackage{subcaption}

\usepackage{multirow}
\usepackage[shortlabels]{enumitem}

\newcounter{smalllist}
\newenvironment{SL}{\begin{list}{{\rm\alph{smalllist})}}{%
\setlength{\topsep}{0mm}\setlength{\parsep}{0mm}\setlength{\itemsep}{0mm}%
\setlength{\labelwidth}{2em}\setlength{\leftmargin}{2em}\usecounter{smalllist}%
}}{\end{list}}

\makeatletter
\theoremstyle{plain}
\newtheorem{thm}{Theorem}
  \theoremstyle{definition}
  \newtheorem*{thm*}{Theorem}
  \newtheorem{defn}[thm]{Definition}
  \theoremstyle{remark}
  \newtheorem{rem}[thm]{Remark}
  \theoremstyle{plain}
  
  \theoremstyle{plain}
  \newtheorem{lem}[thm]{Lemma}
  \theoremstyle{plain}
  
 \theoremstyle{definition}
  
  \theoremstyle{remark}
  \newtheorem*{rem*}{Remark}

  \theoremstyle{definition}
\newtheorem{ass}{Assumption}
  \theoremstyle{definition}

  \theoremstyle{definition}
\newtheorem*{nota*}{Notation}
  \theoremstyle{definition}

\usepackage{amsfonts}
\usepackage{mathrsfs}

\newtheorem*{question*}{\it{QUESTION}}

\theoremstyle{plain}

\newcommand{\N}{\mathbb{N}}
\newcommand{\R}{{\mathbb{R}}}

\makeatother

\begin{document}

\title[]{Eigenstates with Infinite Position Moments}

\author{Michal Jex}
\address[Michal Jex]{
	Department of Physics, Faculty of Nuclear Sciences and Physical Engineering, Czech Technical University in Prague,  Břehová 7,115 19 Praha 1, Czech Republic
	}	
\email{michal.jex@fjfi.cvut.cz}

\subjclass[2020]{81Q05, 47B25}

\keywords{threshold states, bound states, quantum mechanics}

\date{\today}

\begin{abstract}
We prove necessary and sufficient conditions for the Schrödinger operators to have zero-energy bound states at the threshold of the essential spectrum such that they have bounded $k$-th moment. This result is the extension of the results published in D.~Hundertmark, M.~Jex, and M.~Lange  [\emph{Forum Mathematics, Sigma} \textbf{11} (2023)].
\end{abstract}

\maketitle

\section{Introduction}
One of the important aspects of a quantum system is the existence of bound states and their properties related to the stability and localisation of the system. We consider a system described by a Schrödinger operator
\begin{align}
\label{eq:operator}
H=-\Delta+V
\end{align}
on $L^2(\mathbb R^d)$ where $V\in L^1_{\mathrm{loc}}(\R^d)$ is a real-valued potential. The operator can be defined via KLMN theorem as a unique self-adjoint operator related to a closed symmetric quadratic form. The precise conditions on the potential $V$ are given in Assumption \ref{potential} below. The conditions are chosen in such a way that the eigenstates of our operator are continuous.

We are interested in the special case when the ground state is at the threshold of the essential spectrum and has finite $c$-th moment. The existence of such bound states were answered in \cite{HunJexLan22-Potential}. In the present paper we provide sharp sufficient and necessary conditions for the states to have finite $c$-th moment. This is trivial for the eigenfunctions associated to the discrete eigenvalues but non-trivial to the embedded eigenvalues. The special interest has the first moment, i.e., $\langle\psi,|x|\psi\rangle$. The finiteness of the first moment guarantees that the particle ``lives" in the finite region almost certainly, in the sense of its average distance to the origin. The finiteness of the first moment is slightly stronger condition compared to the square integrability of the eigenfunction which corresponds to the existence of associated eigenfunction. Another important quantity is the second moment related to the variance. The method of the proof presented in this paper is applicable for arbitrary moment.

We assume that the essential spectrum has the form $\sigma_{\mathrm{ess}}(H)=\R^+$. This is often the case, especially in physically relevant situation when $\lim_{|x|\rightarrow\infty}V=0$. This means that the eigenvalue which we are interested in lies at the boundary between the discrete and essential spectrum. Point eigenvalues embedded in the continuous spectrum are present only in special specific cases such as slowly decaying oscillating potentials \cite{VonNeuWig-paper-1929}. The conditions for the absence of positive eigenvalues are presented in \cite{Agm70,Sim69}. The existence and behaviour of eigenvalues below the threshold of the essential spectrum is well studied and understood \cite{ReeSim4}. They exhibit exponential localization and they are stabilized by an energy gap.

The study of zero-energy eigenvalues at the threshold of the essential spectrum is quite complicated due to the fact that they are not stabilized by an energy gap, i.e., there is no safety distance to the essential spectrum. Early results on existence and non-existence go back to \cite{JeKa79,Jen80,Jen84,Ram87,Ram88}. It was established that long-range Coulomb repulsion has a stabilisation effect \cite{HofOstHofOstSim83,bol85} which allows the threshold states to exist. This was further extended to potentials decaying slower  than $\frac3{4|x|^2}$ in dimension 3 \cite{BenYar90,GriGar07}. Inverse square scaling is an expected behaviour because it corresponds to the scaling of the Laplacian. The recent result presented in \cite{HunJexLan22-Potential} shows that the leading terms for critical potentials behave as
$$
\frac{d(4-d)}{4|x|^2}+\frac{1}{|x|^2\ln|x|}
$$
where $d$ denotes the dimension. This is an interesting result because there is a phase transition in dimension 4, which also explains the dramatic change in the behaviour of the Laplacian for dimension smaller than 4 and larger than 4. The paper \cite{HunJexLan22-Potential} also provides a much simpler proof compared to the previous results which heavily relied on careful resolvent estimates. The approach from \cite{HunJexLan22-Potential} is also applicable for the case of discrete Schrödinger operators \cite{jex25}. Due to this fact we expect that the results presented in the present paper should be applicable also for the discrete case.

Throughout the paper we use the following notation for iterated logarithms and exponentials which simplifies formulas and improves readability.

\begin{nota*}\label{log}
Let $e_0=0$ and $e_{n+1}=e^{e_n}$. We define iterated logarithm $\ln_n(r)$ for $n\in\N$ and $r>e_n$ as $\ln_1(r):=\ln(r)$ and $\ln_{n+1}(r):=\ln(\ln_n(r))$. Note that iterated logarithm defined in this way is positive.
\end{nota*}

As we have already mentioned we define our operator via a quadratic form
$$
\langle\psi,H\psi\rangle:=\langle\nabla\psi,\nabla\psi\rangle+\langle\psi,V\psi\rangle\,.
$$
The second term is to be understood as $\langle\sqrt{|V|}\psi,\mathrm{sgn} V\sqrt{|V|}\psi\rangle$ where $\mathrm{sgn}$ denotes the sign function. To avoid any problems with domain and to guarantee the continuity of the eigenstates we consider only potentials satisfying the following condition.

\begin{ass}\label{potential}
The potential $V$ is in a local Kato-class $K_{d,\mathrm{loc}}(\R^d)$ and its negative part $V_-=\mathrm{sup}(-V,0)$ is form-bounded with relative bound $a<1$ w.r.t. $-\Delta+V_+$, i.e., there exists $0\leq a<1$ and $b\geq0$ such that
\begin{align*}
\|\sqrt{V_-}\psi\|^2\leq a(\|\nabla\psi\|^2+\|\sqrt{V_+}\psi\|^2)+\|\psi\|^2
\end{align*}
for all $\psi\in H^1(\R^d)\cap\mathcal D(\sqrt{V_+})$ where $D(\sqrt{V_+})$ is a domain of the multiplication operator  $\sqrt{V_+}$ on $L^2(\R^d)$.
\end{ass}

Now we are ready to state two main results of this paper in the following theorems. Together they provide sharp necessary and sufficient conditions for the existence of the ground state of a quantum system at the threshold of the essential spectrum with finite $c$-th moment.

\begin{thm}\label{absence}
Let the potential $V$ satisfy the Assumption \ref{potential}, $\sigma(H)=\R^+$ and $c\geq0$. If there exists $m\in\N_0$ and some $R>e_m$ such that
\begin{align}
V(x)\leq\frac{d(4-d)+4c+c^2}{4|x|^2}+\frac{c+2}{2|x|^2}\sum_{j=1}^m\prod_{k=1}^j\left(\ln_k(|x|)\right)^{-1}
\end{align}
for all $|x|>R$ then $0$ is not a ground state eigenvalue of the Schrödinger operator $H$ with an eigenfunction $\psi$ such that $\langle\psi,|x|^c\psi \rangle<\infty$.
\end{thm}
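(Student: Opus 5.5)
We argue by contradiction. Suppose $H\psi=0$ with $\psi$ a ground state, so $\psi$ may be taken strictly positive and, by Assumption \ref{potential}, continuous. Suppose also $\langle\psi,|x|^c\psi\rangle<\infty$.

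Following \cite{HunJexLan22-Potential}, we reduce to a one-dimensional statement about spherical averages. Let
$$
f(r)=\Big(\int_{S^{d-1}}|\psi(r\omega)|^2\,d\omega\Big)^{1/2}.
$$
Since $\psi$ is a zero-energy ground state and $\sigma(H)=\R^+$, the quadratic form of $H$ is nonnegative. We test it against $\chi\psi$ for radial cutoffs $\chi$ supported in $|x|>R$. The standard IMS/ground state identity gives
$$
\langle \chi\psi,H\chi\psi\rangle=\langle\psi,|\nabla\chi|^2\psi\rangle .
$$
Combining this with the upper bound on $V$ for $|x|>R$, we obtain a differential inequality in the radial variable. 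Equivalently, after substituting $f(r)=r^{-(d-1)/2}g(r)$, we get the pointwise inequality
$$
-g''+\frac{(d-1)(d-3)}{4r^2}g\le Wg,\qquad W\le \frac{d(4-d)+4c+c^2}{4r^2}+\dots
$$
in a weak sense. Here the angular part of the Laplacian is dropped because it is nonnegative. This yields $-g''\le\big(\tfrac{(c+1)^2}{4r^2}+\tfrac{c+2}{2r^2}\sum_{j=1}^m\prod_{k=1}^j\ln_k(r)^{-1}\big)g$. To see this, note that $\frac{d(4-d)+4c+c^2}{4}-\frac{(d-1)(d-3)}{4}=\frac{4c+c^2+3-0}{4}$, which after bookkeeping is $(c+1)^2/4-\dots$; I would recheck the constant so that it matches the critical exponent $r^{-(c+1)/2}$ for $g$.

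Next we use comparison with an explicit supersolution. Take
$$
h(r)=r^{-(c+1)/2}\prod_{k=1}^{m}\big(\ln_k r\big)^{-1/2}.
$$
A direct computation shows that $-h''$ is exactly the critical potential times $h$, up to a remainder of strictly lower order with a favourable sign. This is the iterated-logarithm Hardy calculation. A Sturm/Wronskian comparison argument then follows: $g$ is positive and satisfies $-g''\le Ug$, while $h$ solves the equation with $U$ up to that error. The Wronskian $w=g'h-gh'$ is then monotone on $(R,\infty)$, which forces $g/h$ to be eventually nondecreasing, so $g(r)\ge Ch(r)$ for large $r$. To rule out the alternative case, where $g/h$ decreases, we use that $g\in L^2$ to exclude the second, growing solution branch. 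This gives the lower bound
$$
f(r)\gtrsim r^{-(d-1)/2-(c+1)/2}\prod_{k=1}^m(\ln_k r)^{-1/2}.
$$
Finally we compute the moment:
$$
\langle\psi,|x|^c\psi\rangle=\int_0^\infty r^{c+d-1}f(r)^2\,dr\gtrsim\int_R^\infty \frac{dr}{r\prod_{k=1}^m\ln_k r}=\infty,
$$
which is a contradiction.

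The main obstacle is the comparison step. Its precise form is sensitive: the inequality is only a weak (distributional) inequality for the spherical average $f$, not for $\psi$ itself. Passing from $\psi$ to $f$ needs Kato's inequality or an averaging argument, using $\Delta|\psi|$ and the convexity of the $L^2$ norm on spheres. Also, the lower-order error term of $-h''/h$ must have the correct sign at every logarithmic level. I would first verify the sign of the error in $-h''/h-U$ for $m=1$ and then induct on $m$. If needed, I would modify $h$ by a factor $(\ln_{m+1}r)^{-\epsilon}$ so that the error has the right sign, while the integral still diverges.
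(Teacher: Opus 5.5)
Your proposal has a genuine gap, and it is one of direction. Since $\psi>0$ and $\Delta\psi=V\psi\le U\psi$ for $|x|>R$, where $U$ is the radial bound in the theorem, $\psi$ is a \emph{supersolution} of $-\Delta+U$. Only a supersolution can be compared from below with an explicit subsolution.

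The reduction you propose produces the opposite property. Kato's inequality and convexity of the $L^2$ norm on spheres give subsolution-type inequalities, which can only bound a function from above. Concretely, for $f(r)=\|\psi(r\,\cdot)\|_{L^2(S^{d-1})}$ and $\Delta_r=\partial_r^2+\frac{d-1}{r}\partial_r$ one finds $\Delta_r f\ge f^{-1}\int_{S^{d-1}}V\psi^2\,d\omega+(r^2f)^{-1}\int_{S^{d-1}}|\nabla_\omega\psi|^2\,d\omega$, and there is no reverse inequality. Dropping the nonnegative angular term is likewise legitimate only in this direction.

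The missing idea is to average \emph{linearly}. Set $\bar\psi(r)=\int_{S^{d-1}}\psi(r\omega)\,d\omega$. The angular Laplacian integrates to zero, so testing with radial $\varphi(|x|)\ge 0$ gives $\Delta_r\bar\psi=\int_{S^{d-1}}V\psi\,d\omega\le U\bar\psi$ weakly. At the end, Cauchy--Schwarz gives $\int_{S^{d-1}}\psi^2\,d\omega\ge|S^{d-1}|^{-1}\bar\psi^2$.

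The one-dimensional inequalities also have the wrong signs and the wrong constant. Write $P_i=\prod_{k=1}^i\ln_k^{-1}(r)$. For $g=r^{(d-1)/2}\bar\psi$ the correct statement is $g''\le Qg$ with $Q=\frac{(c+1)(c+3)}{4r^2}+\frac{c+2}{2r^2}\sum_{i=1}^mP_i$. The constants add: $(d-1)(d-3)+d(4-d)+4c+c^2=(c+1)(c+3)$. Your inequality $-g''\le\big(\frac{(c+1)^2}{4r^2}+\dots\big)g$ cannot yield any lower bound, since $g=e^{-r}$ satisfies it and has all moments finite.

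Also, $h$ does not satisfy $-h''\approx Uh$. Exactly, $h''=(Q+E)h$ with $E=\frac{1}{4r^2}\big(\sum_{i=1}^mP_i\big)^2+\frac{1}{2r^2}\sum_{i=1}^mP_i\sum_{j=1}^iP_j>0$. These are precisely the extra terms of $W_{c,m}$, because $\psi^l_{c,m}=|x|^{-(d-1)/2}h$. So $h$ is a subsolution at every logarithmic level, and no induction or $\ln_{m+1}$-correction is needed.

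Then $w=g'h-gh'$ satisfies $w'\le -Egh\le 0$. If $w(r_0)<0$, then $(g/h)'\le w(r_0)h^{-2}$, and $\int^\infty h^{-2}\,dr=\infty$ forces $g/h\to-\infty$. So it is positivity of $g$, not $g\in L^2$, that rules out the bad case, and $g/h$ is nondecreasing.

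With these repairs your route is a valid alternative to the paper's, which does no averaging at all. The paper applies Agmon's comparison principle (Theorem~\ref{thm:comp-agm}) directly in $\R^d$ to the radial subsolution $\psi^l_{c,m}$ of $H$ and the positive solution $\psi$. Your Wronskian argument trades Agmon's theorem and its growth condition for elementary ODE comparison. The cost is justifying the weak radial inequality for $\bar\psi$.
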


Before stating the complementary result we need one more definition.

\begin{defn}\label{def:ctitical} Let $W\ge 0$ be infinitesimally form bounded with respect to $-\Delta+V_+$. 
Then the potential $V$ is \emph{critical} if the Schr\"odinger operator $H$ 
has spectrum $\sigma(H)=\sigma_{ess}(H)=[0,\infty)$ and for all $W$ the operator $H_W= H-W$  
has essential spectrum  $\sigma_{ess}(H_W)= [0,\infty)$ and 
a negative energy bound state. The potential $V$ is \emph{subcritical} if the Schr\"odinger operator $H$ 
has spectrum $\sigma(H)=\sigma_{ess}(H)=[0,\infty)$ and there exists a non-trivial $W$ such that
the operator $H_W= H-W$  
has only essential spectrum $\sigma_{ess}(H_W)= [0,\infty)$. 
\end{defn}

\begin{rem}
The critical potential is such that any attractive perturbation will create a discrete negative eigenvalue. For subcritical potentials there exists an attractive perturbation which would not create a bound state.
\end{rem}

\begin{thm}\label{existence}
Let the potential $V$ be critical satisfying the Assumption \ref{potential}, $\sigma(H)=\R^+$ and $c\geq0$. If there exists some $m\in\N_0$, $\varepsilon>0$, and $R>e_m$ such that
\begin{align}\label{eq:thm_existence}
V(x)\geq\frac{d(4-d)+4c+c^2}{4|x|^2}+\frac{c+2}{2|x|^2}\sum_{j=1}^m\prod_{k=1}^j\left(\ln_k(|x|)\right)^{-1}+\frac{\varepsilon(c+2)}{2|x|^2}\prod_{k=1}^m\left(\ln_k(|x|)\right)^{-1}
\end{align}
for all $|x|>R$ then $0$ is a ground state eigenvalue of the Schrödinger operator $H$ with an eigenfunction $\psi$ such that $\langle\psi,|x|^c\psi \rangle<\infty$.
\end{thm}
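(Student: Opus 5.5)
The plan is to follow the strategy of \cite{HunJexLan22-Potential}: use criticality to obtain a weak zero-energy solution $\psi>0$ that is a limit of ground states, then show via a comparison (supersolution) argument that $\psi$ decays faster than an explicit radial function $\phi$. The decay of $\phi$ is chosen so that $\int|x|^c\phi^2<\infty$.

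\emph{Construction of the candidate.} Take $W\ge0$ compactly supported, bounded and nontrivial, and set $W_n=W/n$. By criticality (Definition \ref{def:ctitical}), $H-W_n$ has a negative eigenvalue $E_n<0$ with a positive, continuous ground state $\psi_n$. Continuity and positivity follow from the local Kato class assumption and Harnack's inequality. Normalize $\psi_n$ by $\|\psi_n\|_{L^2(B_R)}=1$. Since $\sigma(H)=\R^+$, we get $E_n\to0$. Local Harnack and Kato-class elliptic estimates give locally uniform bounds, so along a subsequence $\psi_n\to\psi$ locally uniformly, with $\psi\ge0$ nontrivial and $H\psi=0$ weakly.

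\emph{Comparison function.} Define, for $r=|x|>R$,
\begin{align*}
\phi(x)=|x|^{-\alpha}\prod_{k=1}^m\left(\ln_k|x|\right)^{-\beta}\left(\ln_m|x|\right)^{-\gamma},\qquad \alpha=\tfrac{d+c}{2},\quad\beta=\tfrac12,
\end{align*}
with $\gamma>0$ small depending on $\varepsilon$. A direct computation using $\frac{d}{dr}\ln_k r=\big(r\prod_{j<k}\ln_j r\big)^{-1}$ gives
\begin{align*}
-\Delta\phi=-\phi''-\tfrac{d-1}{r}\phi'=\Big(\tfrac{\alpha(d-2-\alpha)}{r^2}\cdot(-1)+\dots\Big)\phi .
\end{align*}
The leading coefficient is $\alpha(\alpha+2-d)=\frac{(d+c)(c+4-d)}{4}=\frac{d(4-d)+4c+c^2}{4}\cdot(-1)$ up to sign bookkeeping, so $(-\Delta+V)\phi\ge0$ for $r>R$, provided the logarithmic corrections match. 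The log terms produce $\frac{2\alpha-d+2}{r^2}\beta\sum_j\prod_{k\le j}(\ln_k)^{-1}=\frac{c+2}{2r^2}\sum_j\prod(\ln_k)^{-1}$. The extra $\gamma$ term contributes $\frac{(c+2)\gamma}{r^2}\prod_{k\le m}(\ln_k)^{-1}$, which is dominated by the $\varepsilon$ term of \eqref{eq:thm_existence} if $\gamma<\varepsilon/2$. All remaining terms are of lower order, $O\big(r^{-2}\prod_{k\le m}(\ln_k)^{-2}\big)$, and are absorbed for $R$ large. Hence $\phi$ is a supersolution of $H-E_n$ outside $B_R$, for every $E_n\le0$, since $W_n=0$ there.

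\emph{Comparison and conclusion.} Choose $C$ with $\psi_n\le C\phi$ on $\partial B_R$ uniformly in $n$; this is possible by the locally uniform bounds. The maximum principle on $\{|x|>R\}$ for $H-E_n$ then gives $\psi_n\le C\phi$ there. To justify it, apply the weak maximum principle to $(\psi_n-C\phi)_+$ with an exhausting sequence of balls, using $\psi_n\in L^2$, $\phi\to0$, and the positivity of the quadratic form of $-\Delta+V$ outside $B_R$, which follows from the existence of the positive supersolution $\phi$ via the ground state representation. Passing to the limit gives $\psi\le C\phi$. Then
\begin{align*}
\int|x|^c\psi^2\lesssim\int^\infty r^{c-d-c}\prod_{k=1}^m(\ln_k r)^{-1}(\ln_m r)^{-2\gamma}r^{d-1}\,dr<\infty .
\end{align*}
Since $c\ge0$, this also gives $\psi\in L^2$. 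So $\psi$ is a genuine eigenfunction with eigenvalue $0$, positive, hence a ground state.

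The main obstacle is the rigorous maximum principle on the exterior domain with only Kato-class $V$, together with the exact bookkeeping of the iterated-log terms in $\Delta\phi$. I would handle the latter by induction on $m$, writing $\phi=r^{-\alpha}g(\ln r)$ and reducing to a one-dimensional computation.
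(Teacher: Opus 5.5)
Your argument is correct in outline, and your barrier $\phi$ is the paper's supersolution $\psi^u_{c,m}$ (with $\gamma=\epsilon/2=\varepsilon/4$). The difference is in how the ground state is obtained.

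\textbf{The paper's route.} The paper does not rebuild the ground state. It observes that the hypothesis for $c$ implies the one for $c=0$ and quotes \cite[Theorem 1.7]{HunJexLan22-Potential} for an $L^2$ zero-energy ground state $\psi$. It then applies Agmon's comparison principle (Theorem~\ref{thm:comp-agm}) once, directly to $\psi$ at energy $0$. Since $\psi\in L^2$ is continuous, the $\liminf$ condition and the annulus bound are automatic, and nothing has to be uniform in an auxiliary parameter.

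\textbf{Your route.} You re-derive existence by approximating with ground states of $H-W/n$ and running the comparison at energies $E_n<0$. This is self-contained and gives existence and the moment bound in one pass; in effect you reprove the cited theorem with a stronger barrier. The cost is that you need uniform-in-$n$ Harnack constants, a compactness step, and a check that the limit lies in $\mathcal Q(H)$. That last point follows from $\langle\psi_n,(H-W_n)\psi_n\rangle=E_n\|\psi_n\|^2\le 0$, Assumption~\ref{potential}, and the uniform $L^2$ bound supplied by $\psi_n\le C\phi$.

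\textbf{Bookkeeping slips to fix.}
\begin{enumerate}[a)]
\item The correct identity is $-\Delta\phi=-\big(\frac{\alpha(\alpha+2-d)}{r^2}+\dots\big)\phi$, with $\alpha(\alpha+2-d)=\frac{d(4-d)+4c+c^2}{4}$ exactly; there is no sign flip.
\item For $m\ge 2$ the remainders are not $O\big(r^{-2}\prod_{k\le m}\ln_k^{-2}\big)$, since the squared log-sum contains $\frac{1}{4r^2}\ln_1^{-2}$. They are only $o\big(r^{-2}\prod_{k\le m}\ln_k^{-1}\big)$. This still suffices, because $\gamma<\varepsilon/2$ leaves slack $(c+2)(\varepsilon/2-\gamma)r^{-2}\prod_{k\le m}\ln_k^{-1}$.
\item For $m=0$ (reading $\ln_0 r=r$) the $\gamma^2/r^2$ term is of leading order. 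You therefore need $\gamma(c+2)+\gamma^2\le\varepsilon(c+2)/2$, which your choice of small $\gamma$ covers.
\end{enumerate}
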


\begin{rem}
We note that the case $c=0$ in Theorems \ref{absence} and \ref{existence} was already proved in \cite[Theorems 1.3 and 1.7]{HunJexLan22-Potential}.
\end{rem}

\begin{rem}
Due to the fact that the leading order term in both Theorems \ref{absence} and \ref{existence} are of the form $\frac{\mathrm{const}}{|x|^2}$,  one can see that the only potential which can have only some moments finite corresponds to the potential behaving as $\frac{\mathrm{const}}{|x|^2}$. Any other potential with leading term in the form $\frac{\mathrm{const}}{|x|^\beta}, \beta\neq2$ will have either all the moments finite or infinite. 
\end{rem}

\begin{rem}
The result presented in Theorem \ref{existence} can be generalized to excited states because its proof does not require strict positivity of the eigenstate unlike the proof for Theorem \ref{absence}. This would correspond to the situation when due to a perturbation the originally discrete eigenvalue would touch the essential spectrum. In such a case the associated eigenstate would seize to be exponentially decaying, but it would still remain quadratically integrable which in fact requires the asymptotic of the potential to satisfy \eqref{eq:thm_existence}.
\end{rem}

\textit{Organisation of the paper} The paper is split into three sections. The following two sections are dedicated to proofs of Theorems \ref{absence} and \ref{existence}. The last section presents a non-trivial example of the family of potentials which posses the zero energy ground states.

\section{Proof of Absence Result (Theorem \ref{absence})}\label{sec:abs}
The proof of the claim relies on the subharmonic comparison principle. The formulation which we use is due to Agmon \cite[Theorem 2.7]{Agm85} which is written below as Theorem \ref{thm:comp-agm} for the convenience of the reader. This specific form has the advantage of minimal regularity assumptions due to its quadratic form formulations. Before we are able to state it, we need a few definitions. In the introduction we have defined the quadratic form domain of the Schrödinger operator $\Delta+V$ as $\mathcal Q(H)=H^1(\mathbb R^d)\cap D(\sqrt{V_+})$. It is straightforward to also introduce the local domain of a quadratic form for functions supported in the region $U$ as 
$$
\mathcal Q_{loc}^U(H)\coloneqq \{\psi\in L_{loc}^2(U):\chi\psi\in\mathcal Q(H)\,\mathrm{for}\,\mathrm{all}\,\chi\in C_0^\infty(U)\}\,.
$$
This allows us to define local weak subsolutions and supersolutions in the following way.

\begin{defn}\label{def:eigenfunctions etc}
Consider Schrödinger operator $H$ defined in \eqref{eq:operator}. Then
\begin{enumerate} [a)]
\item $u$ is a (weak) eigenfunction of the Schr\"odinger operator 
		$H$ with energy $E$ if  $u\in \mathcal Q(H)$ and  
  \begin{equation}\label{eq:weak eigenfunction}
    \langle \varphi, (H-E) u\rangle = 0 	
  \end{equation}
  for every $\varphi\in  C_0^\infty(\R^d)$.
\item $u$ is a (weak) local eigenfunction of the Schr\"odinger operator 
		$H$ with energy $E$ in $U\subset\R^d$  
 	 if $u\in \mathcal Q^U_{loc}(H)$ and  
  \begin{equation}\label{eq:weak local eigenfunction}
    \langle \varphi, (H-E) u\rangle = 0 	
  \end{equation}
  for every $\varphi\in  C_0^\infty(U)$.
\item  $u$ is a supersolution of the Schr\"odinger operator 
		$H$ with energy $E$ in $U\subset\R^d$ if 
  $u\in \mathcal Q^U_{loc}(H)$ and 
  \begin{equation}\label{eq:supersolution}
  \langle \varphi, (H-E) u\rangle \ge 0 	
  \end{equation}
  for every non-negative $\varphi\in C_0^\infty(U)$. 
\item $u$ is a subsolution of the Schr\"odinger operator 
		$H$ with energy $E$ in $U\subset\R^d$  if 
  $u\in \mathcal Q^U_{loc}(H)$ and 
  \begin{equation}\label{eq:subsolution}
  \langle \varphi, (H-E)  u\rangle \le  0 	
  \end{equation}
  for every non-negative $\varphi\in  C_0^\infty(U)$. 
\end{enumerate}
\end{defn}

\begin{thm}[Agmon's version of the comparison principle]\label{thm:comp-agm}
Let $w$ be a positive supersolution of the Schr\"odinger operator $H$ at energy $E$ in a neighborhood of infinity 
$U_R\coloneqq\{x \in \R^d\,:\, |x| > R\}$. Let $v$ be a subsolution of 
$H$ at energy $E$ in $U_R$. Suppose that
\begin{equation}\label{eq:liminf cond}
\liminf_{N\rightarrow\infty}\left(\frac{1}{N^2}\int_{N\leq|x|\leq\alpha N}|v|^2\mathrm d x\right)=0
\end{equation}
for some $\alpha>1$. If for some $\delta>0$ and $0\le C<\infty$ one has 
\begin{equation}\label{eq:a-priori comparison on annulus}
v(x)\leq Cw(x) \,  \text{ on the annulus } R<|x|\le R+\delta \, ,
\end{equation}
 then 
\begin{equation}
v(x)\leq Cw(x) \, \text{ for all } x\in U_{R}\,.
\end{equation}
\end{thm}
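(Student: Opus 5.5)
The plan is to compare $v$ and $Cw$ through the quotient $g=(v-Cw)_+/w$. A localized ground-state-representation argument will show that $\nabla g=0$ on $U_R$. Since $g$ vanishes on the annulus $R<|x|\le R+\delta$, this forces $g\equiv 0$, which is exactly the claim $v\le Cw$ on $U_R$.

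\emph{Step 1 (a non-negative subsolution).} Since $v$ is a subsolution and $w$ a supersolution, $v-Cw$ is a subsolution of $H$ at energy $E$ in $U_R$. Set $u\coloneqq (v-Cw)_+$. By a Kato-type inequality, $u$ is again a subsolution. Formally, $\nabla u=\mathbf 1_{\{v>Cw\}}\nabla(v-Cw)$, and testing against $\varphi\,\mathbf 1_{\{v>Cw\}}$ after a standard approximation of the positive part by smooth convex functions gives the inequality. By \eqref{eq:a-priori comparison on annulus}, $u$ vanishes on $R<|x|\le R+\delta$. Moreover $0\le u\le |v|$, because $Cw\ge 0$. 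Write $u=gw$ with $g\ge 0$, which is legitimate since $w>0$.

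\emph{Step 2 (localized energy identity).} Let $\chi_N$ be a radial cutoff with $\chi_N=1$ on $|x|\le N$ and $\chi_N=0$ for $|x|\ge \alpha N$, satisfying $|\nabla\chi_N|\le 2/((\alpha-1)N)$. Because $u$ vanishes near $|x|=R$, both $\chi_N^2u$ and $\chi_N^2g^2w=\chi_N^2 gu$ are non-negative and compactly supported in $U_R$, so they are admissible test functions after density arguments. The subsolution property gives $\langle \chi_N^2u,(H-E)u\rangle\le 0$, and the supersolution property gives $\langle \chi_N^2 g^2w,(H-E)w\rangle\ge0$. Subtract the second from the first. The potential and energy terms cancel, since $\chi_N^2 u\cdot u=\chi_N^2g^2w\cdot w$. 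Expanding the gradients with $u=gw$ leaves
\begin{equation*}
\int \chi_N^2 w^2|\nabla g|^2\,\mathrm dx+2\int \chi_N g\,w^2\,\nabla\chi_N\cdot\nabla g\,\mathrm dx\le 0 .
\end{equation*}
Applying Cauchy--Schwarz to the cross term, and using $gw=u$, yields
\begin{equation*}
\int \chi_N^2 w^2|\nabla g|^2\,\mathrm dx\le 4\int u^2|\nabla\chi_N|^2\,\mathrm dx\le \frac{16}{(\alpha-1)^2N^2}\int_{N\le|x|\le\alpha N}|v|^2\,\mathrm dx .
\end{equation*}

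\emph{Step 3 (conclusion).} By \eqref{eq:liminf cond}, the right-hand side tends to $0$ along a subsequence $N_k\to\infty$. The left-hand side is monotone in $N$ and bounds $\int_{R<|x|<N}w^2|\nabla g|^2$. Hence $w^2|\nabla g|^2=0$ a.e. on $U_R$, and so $\nabla g=0$ because $w>0$. Since $U_R$ is connected for $d\ge2$, $g$ is constant there. It vanishes on the annulus, so $g\equiv0$. For $d=1$, apply the same argument on each half-line. This gives $v\le Cw$ on $U_R$.

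The main obstacle is the regularity bookkeeping rather than the computation. One must check that $u=(v-Cw)_+$ lies in $\mathcal Q^{U_R}_{loc}(H)$ and is a subsolution. One must also check that $g^2w=u^2/w$ is a legitimate test function, which needs local bounds on $1/w$ or a truncation $g\wedge M$, $w\vee\epsilon$ followed by a limit. Finally, the quadratic-form pairings with merely form-bounded $V$ must be justified for non-smooth test functions via approximation by $C_0^\infty$ functions. I would handle this by working with the truncated quotient $\min(g,M)$ and letting $M\to\infty$ by monotone convergence.
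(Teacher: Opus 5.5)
The paper gives no proof of this statement. It quotes it from Agmon's paper (Theorem 2.7 there) for the reader's convenience, so there is no in-paper argument to compare against. Your argument is the standard ground-state-representation proof of such comparison principles, and its core is correct:
\begin{itemize}
\item with $u=gw$, the potential terms cancel exactly;
\item the gradient terms reduce to $\chi_N^2w^2|\nabla g|^2+2\chi_N g w^2\nabla\chi_N\cdot\nabla g$;
\item $A\le 2\sqrt{AB}$ gives $A\le 4B$;
\item $0\le u\le v_+\le |v|$ lets the liminf hypothesis finish.
\end{itemize}
You can also drop the Kato-inequality step. Testing the subsolution $v-Cw$ directly against $\chi_N^2u$ gives the same form expression, because $\nabla u\cdot\nabla(v-Cw)=|\nabla u|^2$ and $u(v-Cw)=u^2$ almost everywhere.

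The one part of your plan that would not work as written is the truncation by $g\wedge M$. Suppose you use $\chi_N^2(g\wedge M)w$ and $\chi_N^2(g\wedge M)gw$ as the two test functions. The potential terms still cancel, but the same algebra gives $\int\chi_N^2w^2|\nabla(g\wedge M)|^2+2\int\chi_N(g\wedge M)w^2\nabla\chi_N\cdot\nabla g\le 0$. On $\{g>M\}$ the cross term contains $M\nabla g$, which the left side does not control. If instead you truncate only the supersolution test function, you pick up terms on $\{g>M\}$ such as $\chi_N^2\nabla u\cdot\nabla(u-Mw)$, which have no sign.

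Truncating $g$ also does not settle whether $g=u/w$ lies in $W^{1,1}_{\mathrm{loc}}$ at all. You need that both for the product rule and for concluding from $\nabla g=0$ that $g$ is constant.

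The clean fix is to use local bounds instead. In both applications in the paper, $v$ and $w$ are continuous and $w>0$; in the Kato-class setting, eigenfunctions are continuous. Hence $1/w$ and $u\le v_+$ are locally bounded in $U_R$. So $g\in H^1_{\mathrm{loc}}\cap L^\infty_{\mathrm{loc}}(U_R)$, and $u^2/w=gu$ is a bounded, non-negative, compactly supported element of $\mathcal Q(H)$; no truncation is needed.

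What remains is to extend the sub- and supersolution inequalities from $C_0^\infty$ to such test functions. Mollification does this. It preserves the sup norm, so dominated convergence applies with the integrable majorant $|V|(|v|+|w|)\in L^1_{\mathrm{loc}}$. That majorant is integrable because $V\in L^1_{\mathrm{loc}}$ and $|V|v^2,|V|w^2\in L^1_{\mathrm{loc}}$ by form-domain membership.
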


\begin{rem}\label{rem:l2}
Note that the condition \eqref{eq:liminf cond} is trivially satisfied for $L^2$ functions. However in the course of the proof we will also need to use functions which are not necessary square integrable.
\end{rem}

For $m\in\N_0$ and $c\geq0$ we define an auxiliary function
\begin{equation}\label{eq:lower}
\psi^l_{c,m}(x):=|x|^{-\frac c2-\frac d2}\prod_{j=1}^m\ln_j^{-1/2}(|x|)\, \text{ for all } |x|>e_m\,.
\end{equation}
Note that this function is chosen in such a way that the expression $$\int_{|x|>e_m}|x|^{\tilde c}|\psi^l_{c,m}(x)|^2\mathrm dx$$ is infinite if and only if $\tilde c\geq c$, i.e. the function $\psi^l_{c,m}$ barely fails to have finite $c$-th moment. However this function is still in the local form domain $Q^{U_{e_m}}_{loc}(H)$, where $U_R:=\{x\in\R^d:|x|>R\}$. It is even in $L^2(U_{e_m})$ provided that $c>0$. In the following lemma we construct a potential $W_{c,m}$ for which $\psi^l_{c,m}(x)$ corresponds to a zero energy eigenstate in $U_{e_m}$ to the operator $-\Delta+W_{c,m}$, i.e. $(-\Delta+W_{c,m})\psi^l_{c,m}(x)=0$ in $U_{e_m}$.

\begin{lem}\label{lemlower}
Let $m\in\N_0$, $c\in\R^+$, then $(-\Delta+W_{c,m})\psi^l_{c,m}(x)=0$ in $U_R=\{x\in\R^d:|x|>R\}$ for all large enough $R\geq e_m$, where
\begin{equation}\label{eq:Wcm}
  \begin{split}
W_{c,m}\coloneqq &\frac{d(4-d)+c^2+4c}{4|x|^2}+\frac{c+2}{2|x|^2}\sum_{k=1}^m\prod_{j=1}^k\ln_j^{-1}(|x|)+\frac{1}{4|x|^2}\left(\sum_{k=1}^m\prod_{j=1}^k\ln_j^{-1}(|x|)\right)^2\\&\quad+\frac{1}{2|x|^2}\sum_{i=1}^m\sum_{j=1}^i\prod_{k=1}^i\prod_{l=1}^j\ln_k^{-1}(|x|)\ln_l^{-1}(|x|)
 \end{split}
\end{equation}
is well defined for $|x|>e_m$.
\end{lem}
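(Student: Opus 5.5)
The plan is a direct computation. Since $\psi^l_{c,m}$ is radial, smooth and strictly positive on $U_{e_m}$, the equation $(-\Delta+W_{c,m})\psi^l_{c,m}=0$ there is equivalent to $W_{c,m}=\Delta\psi^l_{c,m}/\psi^l_{c,m}$. So I will compute this quotient and compare it term by term with \eqref{eq:Wcm}. Taking $R\ge e_m$ (any $R>e_m$ should do) ensures every $\ln_j(|x|)$, $j\le m$, is positive. Hence $\psi^l_{c,m}$ and $W_{c,m}$ are well defined and smooth on $U_R$, so the weak formulation reduces to the classical one.

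Write $r=|x|$, $a=\frac{c+d}{2}$, and $P_k=\prod_{j=1}^k\ln_j^{-1}(r)$, $S=\sum_{k=1}^m P_k$. Since $\ln\bigl(\ln_j^{-1/2}(r)\bigr)=-\tfrac12\ln_{j+1}(r)$, we have $\psi^l_{c,m}=e^{f}$ with
$$f(r)=-a\ln r-\tfrac12\sum_{j=1}^m\ln_{j+1}(r).$$
For radial functions,
$$\frac{\Delta e^f}{e^f}=f''+(f')^2+\frac{d-1}{r}f'.$$
The basic derivative identity is $\frac{d}{dr}\ln_{k+1}(r)=\frac{P_k}{r}$, which follows by induction from the chain rule. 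It gives
$$f'=-\frac{a}{r}-\frac{S}{2r}.$$
Differentiating $P_k$ logarithmically gives $P_k'=-\frac{P_k}{r}\sum_{l=1}^kP_l$. Hence
$$f''=\frac{a}{r^2}+\frac{S}{2r^2}+\frac{1}{2r^2}\sum_{k=1}^m\sum_{l=1}^kP_kP_l,$$
which produces exactly the double-sum term in \eqref{eq:Wcm}.

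It remains to collect terms:
\begin{itemize}
\item The remaining contributions are $(f')^2=\frac{a^2}{r^2}+\frac{aS}{r^2}+\frac{S^2}{4r^2}$ and $\frac{d-1}{r}f'=-\frac{(d-1)a}{r^2}-\frac{(d-1)S}{2r^2}$.
\item The $r^{-2}$ coefficient is $a^2-(d-2)a=\frac{(c+d)(c+4-d)}{4}=\frac{d(4-d)+c^2+4c}{4}$.
\item The coefficient of $S/r^2$ is $\frac12+a-\frac{d-1}{2}=\frac{c+2}{2}$.
\item The $S^2$ term is $\frac{1}{4r^2}S^2$.
\end{itemize}
These agree with \eqref{eq:Wcm}.

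The main (mild) obstacle is bookkeeping. One must verify the derivative formula for iterated logarithms and keep the index ranges in the double sum consistent, so that the term $-S'/(2r)$ matches $\frac{1}{2|x|^2}\sum_{i}\sum_{j\le i}P_iP_j$ exactly. For $m=0$ all sums are empty and the result reduces to the pure power computation.
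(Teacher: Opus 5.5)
Your proposal is correct and follows essentially the same route as the paper, which also reduces the claim to $W_{c,m}=\Delta\psi^l_{c,m}/\psi^l_{c,m}$ and computes the first and second radial derivatives using the chain-rule formula for $\partial_r\ln_j$. Your logarithmic-derivative bookkeeping ($\partial_r\psi^l_{c,m}=\psi^l_{c,m}f'$ and $\partial_{rr}\psi^l_{c,m}=\psi^l_{c,m}(f''+(f')^2)$) is a tidier organization of that same computation, and your coefficient checks $\frac{(c+d)(c+4-d)}{4}=\frac{d(4-d)+c^2+4c}{4}$ and $\frac12+a-\frac{d-1}{2}=\frac{c+2}{2}$ are right.
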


\begin{proof}
It is not hard to see that as long as $W_{c,m}(x)=\frac{\Delta\psi^l_{c,m}(x)}{\psi^l_{c,m}(x)}$, then $(-\Delta+W_{c,m})\psi^l_{c,m}(x)=0$ in $U_R$ for any $R\geq e_m$. Using notation $r=|x|$ we can write
\begin{equation}\label{eq:laplace}
\Delta f(r)=\partial_{rr} f(r)+\frac{d-1}{r}\partial_r\psi(r)\,.
\end{equation}
By a direct calculation one can check that
\begin{equation}\label{eq:lemma:non}
  \begin{split}
\partial_r\psi^l_{c,m}(x)&=-\psi^l_{c,m}(x)\left(\frac{c+d}{2r}+\frac1{2r}\sum_{k=1}^m\prod_{j=1}^k\ln_j^{-1}(r)\right)\,,\\
\partial_{rr}\psi^l_{c,m}(x)&=\psi^l_{c,m}(x)\left(\frac{c+d}{2r}+\frac1{2r}\sum_{k=1}^m\prod_{j=1}^k\ln_j^{-1}(r)\right)^2+\frac{c+d}{2r^2}\psi^l_{c,m}(x)
\\&\quad+\frac1{2r^2}\psi^l_{c,m}(x)\sum_{k=1}^m\prod_{j=1}^k\ln_j^{-1}(r)+\frac1{2r^2}\psi^l_{c,m}(x)\sum_{i=1}^m\sum_{j=1}^i\prod_{k=1}^i\prod_{l=1}^j\ln_k^{-1}(r)\ln_l^{-1}(r)
 \end{split}
\end{equation}
where one uses
$$
\partial_r\ln_1(r)=\frac1r\quad\text{and}\quad \partial_r\ln_j(r)=\frac1{\ln_{j-1}(r)}\frac1{\ln_{j-2}(r)}\ldots\frac1{\ln_{1}(r)}\frac1r\,.
$$
Now combining \eqref{eq:laplace} and \eqref{eq:lemma:non} we obtain $W_{c,m}$ in form \eqref{eq:Wcm} which completes the proof 
\end{proof}

\begin{proof}[Proof of Theorem \ref{absence}]
First we show that for any potential $V$ satisfying the assumption of Theorem \ref{absence} the function $\psi^l_{c,m}(x)$ is a subsolution. Due to the fact that $\psi^l_{c,m}(x)\in Q^{U_{e_m}}_{loc}(H)$ we can write for any positive $\varphi\in C_0^\infty(U_{e_m})$
\begin{equation}
  \begin{split}
\langle\varphi,(-\Delta+V)\psi^l_{c,m}\rangle\leq\langle\varphi,(-\Delta+W_{c,m})\psi^l_{c,m}\rangle=0
 \end{split}
\end{equation}
where we have used that $\psi^l_{c,m}(x)$ is positive and 
$$
\frac{1}{|x|^2}\left(\sum_{k=1}^m\prod_{j=1}^k\ln_j^{-1}(|x|)\right)^2+\frac{1}{2|x|^2}\sum_{i=1}^m\sum_{j=1}^i\prod_{k=1}^i\prod_{l=1}^j\ln_k^{-1}(|x|)\ln_l^{-1}(|x|)>0\text{ for }x\in U_{e_m}\,.
$$
Furthermore due to the fact that $V$ is in the local Kato class we have that the eigenfunctions of $H$ are continuous \cite{AizSim82, Sim82,Simader90}. Also the ground state eigenfunction can be chosen to be positive \cite{Far72, Goe77, ReeSim4}. 

Now assuming that $\psi$ is a positive ground state eigenfunction of $H$ we define $c_1^R:=\sup_{R\leq|x|\leq R+1}\psi$ and $c_2^R:=\sup_{R\leq|x|\leq R+1}\psi^l_{c,m}(x)$. Clearly the constant $C=\frac{c_2^R}{c_1^R}$ is finite and positive which implies that $C\psi(x)\geq \psi^l_{c,m}(x)$ for $R\leq|x|\leq R+1$. Using Theorem \ref{thm:comp-agm} we obtain
$$
\psi^l_{c,m}(x)\leq C\psi(x)
$$
for all $|x|>R$. The function $\psi^l_{c,m}(x)$ does not have finite $c$-th momentum and due to the previous inequality the function $\psi$ can not have finite $c$-th momentum because
$$
\int_{U_{e_m}}|x|^c|\psi(x)|^2\mathrm dx\geq\frac1{C^2}\int_{U_{e_m}}|x|^c|\psi^l_{c,m}(x)|^2\mathrm dx=\infty\,.
$$
\end{proof}

\section{Proof of Existence Result (Theorem \ref{existence})}
The complicated and delicate task of showing the existence of a $L^2$ zero energy ground state was already treated in \cite[Theorem 1.7]{HunJexLan22-Potential} and due to the form of the potential bound \eqref{eq:thm_existence} we can use this result. The remaining part of the existence proof follows a similar pattern as the non-existence result in Section \ref{sec:abs}. The main difference is that the ground state function $\psi$ now plays the role of the subsolution. To this end we start by taking the function
\begin{equation}\label{eq:upper}
\begin{split}
\psi^u_{c,0}(x)&:=|x|^{-\frac c2-\frac d2-\frac\varepsilon2}\, \text{ for all } |x|>e_0\,,\\
\psi^u_{c,m}(x)&:=|x|^{-\frac c2-\frac d2}\prod_{j=1}^m\ln_j^{-1/2}(|x|)\ln_m^{-\epsilon/2}(|x|)\, \text{ for all } |x|>e_m\,,m\in\N\,,
\end{split}
\end{equation}
which has barely integrable $c$-th momentum, i.e.,
$$\int_{|x|>e_m}|x|^{\tilde c}|\psi^u_{c,m}(x)|^2\mathrm dx<\infty$$
if and only if $\tilde c\leq c$. We need to construct the potential for which $\psi^u_{c,m}$ is a ground state eigenfunction. It is a straightforward calculation to find the potential $W^{c,m}$ for which the function $\psi^u_{c,m}(x)$ is a zero energy ground state in $U_{e_m}$.

\begin{lem}
Let $m\in\N_0$, $\epsilon>0$, $c\in\R^+$, then $(-\Delta+W^{c,m})\psi^u_{c,m}(x)=0$ in $U_R=\{x\in\R^d:|x|>R\}$ for all large enough $R\geq e_m$, where
\begin{equation}\label{eq:Wucm}
  \begin{split}
W^{c,0}\coloneqq &\frac{d(4-d)+c^2+4c+\epsilon(2c+4+\epsilon)}{4|x|^2}\\
W^{c,m}\coloneqq &\frac{d(4-d)+c^2+4c}{4|x|^2}+\frac{c+2}{2|x|^2}\sum_{k=1}^m\prod_{j=1}^k\ln_j^{-1}(|x|)+\frac{c\epsilon+2\epsilon}{2|x|^2}\prod_{j=1}^m\ln_j^{-1}(|x|)\\&\quad+\frac{1}{4|x|^2}\left(\sum_{k=1}^m\prod_{j=1}^k\ln_j^{-1}(|x|)+\epsilon\prod_{j=1}^m\ln_j^{-1}(|x|)\right)^2\\&\quad+\frac{1}{2|x|^2}\sum_{i=1}^m\sum_{j=1}^i\prod_{k=1}^i\prod_{l=1}^j\ln_k^{-1}(|x|)\ln_l^{-1}(|x|)+\frac{\epsilon}{2|x|^2}\prod_{l=1}^m\ln_l^{-1}(|x|)\sum_{i=1}^m\prod_{k=1}^i\ln_k^{-1}(|x|)
 \end{split}
\end{equation}
is well defined for $|x|>e_m$.
\end{lem}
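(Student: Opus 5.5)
The plan mirrors the proof of Lemma \ref{lemlower}. Since $\psi^u_{c,m}$ is positive and smooth on $U_{e_m}$, it suffices to show that $W^{c,m}=\frac{\Delta\psi^u_{c,m}}{\psi^u_{c,m}}$ there. Because $\psi^u_{c,m}$ is radial, \eqref{eq:laplace} gives
$$\frac{\Delta\psi^u_{c,m}}{\psi^u_{c,m}}=\frac{\partial_{rr}\psi^u_{c,m}}{\psi^u_{c,m}}+\frac{d-1}{r}\,\frac{\partial_r\psi^u_{c,m}}{\psi^u_{c,m}}.$$
I will work with the logarithmic derivative. Set $g=\partial_r\ln\psi^u_{c,m}$. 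Then $\partial_{rr}\psi/\psi=g^2+g'$, and so
$$\frac{\Delta\psi}{\psi}=g^2+g'+\frac{d-1}{r}g.$$

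For $m=0$ we have $g=-\frac{c+d+\varepsilon}{2r}$. This gives
$$\frac{\Delta\psi}{\psi}=\frac{1}{4r^2}\Big[(c+d+\varepsilon)^2+2(c+d+\varepsilon)-2(d-1)(c+d+\varepsilon)\Big]=\frac{(c+d+\varepsilon)(c+\varepsilon+4-d)}{4r^2}.$$
Expanding, this equals $\frac{d(4-d)+c^2+4c+\varepsilon(2c+4+\varepsilon)}{4r^2}$, which is $W^{c,0}$.

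For $m\ge1$, write $P_k=\prod_{j=1}^k\ln_j^{-1}(r)$ and $S=\sum_{k=1}^mP_k$. Using $\partial_r\ln\ln_k(r)=\frac1r P_k$ (from the derivative formula quoted in the proof of Lemma \ref{lemlower}), I get
$$g=-\frac{1}{2r}\big(c+d+S+\varepsilon P_m\big).$$
This is exactly the lower-function case with $S$ replaced by $S+\varepsilon P_m$.

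The only new input is $P_k'$. Differentiating the product gives $rP_k'=-P_k\sum_{i=1}^kP_i$. Hence
$$g'=\frac{1}{2r^2}\big(c+d+S+\varepsilon P_m\big)+\frac{1}{2r^2}\Big(\sum_{k=1}^m P_k\sum_{i=1}^kP_i+\varepsilon P_m S\Big).$$

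I then assemble $g^2+g'+\frac{d-1}{r}g$ and collect terms:
\begin{itemize}
\item The constant part gives $\frac{d(4-d)+c^2+4c}{4r^2}$, as in Lemma \ref{lemlower}.
\item The part linear in $S+\varepsilon P_m$ has coefficient $\frac{1}{2r^2}\big[(c+d)+1-(d-1)\big]=\frac{c+2}{2r^2}$. This produces both $\frac{c+2}{2r^2}S$ and $\frac{(c+2)\varepsilon}{2r^2}P_m$.
\item The quadratic part gives $\frac{1}{4r^2}(S+\varepsilon P_m)^2$.
\item The double sum $\frac{1}{2r^2}\sum_k P_k\sum_{i\le k}P_i$ is the term written in \eqref{eq:Wucm} with the index convention $\prod_{k\le i}\prod_{l\le j}$.
\item The remaining term is $\frac{\varepsilon}{2r^2}P_mS$.
\end{itemize}
Together these match \eqref{eq:Wucm} exactly. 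Well-definedness for $|x|>e_m$ holds because every $\ln_j$ with $j\le m$ is positive there.

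The main obstacle is purely bookkeeping. One must differentiate the product $P_m$ correctly, keeping the extra factor $\ln_m^{-\varepsilon/2}$ that appears only at level $m$. One must also check that the indexing of the double sum in \eqref{eq:Wucm} agrees with $\sum_kP_k\sum_{i\le k}P_i$. I would verify the computation first for $m=1$ as a sanity check.
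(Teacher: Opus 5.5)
Your proposal is correct and is essentially the paper's argument: both verify directly that $W^{c,m}=\Delta\psi^u_{c,m}/\psi^u_{c,m}$ on $U_{e_m}$, and your computations of $g$, $g'$ and the collected terms all check out. The only difference is bookkeeping. The paper writes $\psi^u_{c,m}=\psi^l_{c,m}\ln_m^{-\epsilon/2}$ and combines the product rule with the derivatives already computed in Lemma \ref{lemlower}, whereas your logarithmic-derivative formulation makes the substitution $S\mapsto S+\varepsilon P_m$ transparent and also handles $m=0$ explicitly, a case the product-rule reduction does not literally cover since $\ln_0$ is not defined.
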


\begin{proof}
We will use a large portion of the calculations done in the proof of Lemma \ref{lemlower}. With slight abuse of notation we can write for spherically symmetric functions
\begin{equation}\label{eq:delta}
\Delta(f(r)g(r))=g(r)\Delta f(r)+2\partial_r f(r)\partial_rg(r)+f(r)\Delta g(r)\,.
\end{equation}
Clearly $\psi^u_{c,m}(x)=\psi^l_{c,m}(x)\ln_m^{-\epsilon/2}(|x|)$. It is straightforward to see that
\begin{equation}\label{eq:lemma:exi}
  \begin{split}
\partial_r\ln_m^{-\epsilon/2}(r)&=-\frac{\epsilon}{2r}\ln_m^{-\epsilon/2}(r)\prod_{j=1}^m\ln_j^{-1}(r)\,,\\
\partial_{rr}\ln_m^{-\epsilon/2}(r)&=\ln_m^{-\epsilon/2}(r)\left(\frac{\epsilon}{2r}\prod_{j=1}^m\ln_j^{-1}(r)\right)^2+\frac{\epsilon}{2r^2}\ln_m^{-\epsilon/2}(r)\prod_{j=1}^m\ln_j^{-1}(r)
\\&\quad+\frac\epsilon{2r^2}\ln_m^{-\epsilon/2}(r)\prod_{j=1}^m\ln_j^{-1}(r)\sum_{i=1}^m\prod_{k=1}^i\ln_k^{-1}(r)\,.
 \end{split}
\end{equation}
Combining \eqref{eq:delta}, \eqref{eq:lemma:non}, \eqref{eq:Wcm},  and \eqref{eq:lemma:exi} we get the formula \eqref{eq:Wucm}.
\end{proof}

\begin{rem}
We note that $W^{\tilde c,m}<W^{c,m}$ for $\tilde c<c$. This also implies that $\psi^u_{\tilde c,m}$ is a supersolution for any $W^{c,m}$ as long as $\tilde c\leq c$.
\end{rem}

\begin{proof}[Proof of Theorem \ref{existence}]
Due to the assumptions on the potential $V$ \cite[Theorem 1.7]{HunJexLan22-Potential} gives that there exists a $\psi\in H^2(\R^d)$ such that $H\psi=0$. Now we need to check that such function $\psi$ has finite moments $|x|^{\tilde c}$ for  $c\in(0,c]$. We need to construct a suitable upper bounds for the function $\psi$. First we show that the function $\psi^u_{c,m}$ is a supersolution of the operator $H$ with energy $0$. This follows directly from
$$
W^{c,m}\leq\frac{d(4-d)+4c+c^2}{4|x|^2}+\frac{c+2}{2|x|^2}\sum_{j=1}^m\prod_{k=1}^j\left(\ln_k(|x|)\right)^{-1}+\frac{\varepsilon(c+2)}{2|x|^2}\prod_{k=1}^m\left(\ln_k(|x|)\right)^{-1}\leq  V(x)
$$
where the first inequality holds for sufficiently large $|x|$ and $\epsilon=\frac{\varepsilon}{2}$ and the fact that $\psi^u_{c,m}$ is positive.
Now using Theorem \ref{thm:comp-agm} we obtain $\psi(x)\leq C\psi^u_{c,m}(x)$ for a given $C\in\mathbb R^+$ for all $x\geq R$. Writing
$$
\langle\psi,|x|^{\tilde c}\psi\rangle=\int_{\mathbb R^d}|x|^{\tilde c}|\psi|^2\mathrm dx\leq R^{\tilde c}\int_{\mathbb B_R(0)}|\psi|^2\mathrm dx+\int_{\mathbb B_R^c(0)}|x|^{\tilde c}|\psi^u_{c,m}(x)|^2\mathrm dx<\infty
$$
where the fact $\int_{\mathbb B_R^c(0)}|x|^{\tilde c}|\psi^u_{c,m}(x)|^2<\infty$ for all $\tilde c\in[0,c]$ completes the proof.
\end{proof}

\section{Example Potential}
In this section we revisit the example potential presented in \cite[Appendix A]{HunJexLan22-Potential}. The potential depends on a parameter $\alpha$ and dimension $d$. For positive values of $\alpha$ the system will either have a resonance or zero energy ground state. There is also a simple relation between the value of $\alpha$ and finiteness of $c$-th momentum for the ground state. The potential which we work with can be written as

\begin{equation}\label{form:Vad}
	V_{\alpha,d}(x):=
		\frac{4\alpha^2 - (d-2)^2 }
		  {4\big(1+|x|^2\big)}
		  +\frac{1-(\alpha+d/2)^2}
		  {\big(1+|x|^2\big)^2}\,.
\end{equation}
 In the following Lemma we recall already established results within \ref{claim 1}) to \ref{claim 4}) and the new result following from the Theorems \ref{absence} and \ref{existence} in \ref{claim 5}).

\begin{lem}\label{thm: V-alpha-d}
  Let $d\in\N$, $\alpha\in\R$, and $H_{\alpha,d}= -\Delta+V_{\alpha,d}$ 
  be the self-adjoint Schr\"odinger operator with potential 
  $V_{\alpha,d}$ given above.  
  Then 
  \begin{SL}
  	\item\label{claim 1} $\sigma(H_{\alpha,d})=\sigma_{ess}(H_{\alpha,d})=[0,\infty)$. 
  	\item\label{claim 2} For all $\alpha\ge 0$ the potential $V_{\alpha,d}$ is critical, that is, the Schr\"odinger operator 
  			$H_{\alpha,d}$ has a virtual level. 
  	\item\label{claim 3} Zero is not an eigenvalue  of 
  		  $H_{\alpha,d}$ when $ 0\le \alpha\le 1$.  
  		  For  $\alpha>1$ zero is an eigenvalue. 
  		  The zero energy resonance for $0\le \alpha\le 1$, 
  		  respectively ground state for $\alpha>1$, is given by  
  		$  	\psi_{\alpha,d}(x)=(1+|x|^2)^{(2-d)/4 -\alpha/2}	$.
	\item\label{claim 4} For $\alpha<0$, the potential 
			$V_{\alpha,d}$ is subcritical, and zero is neither an eigenvalue nor a resonance. 
	\item\label{claim 5} For $\alpha>1$, the moment $\langle\psi_{\alpha,d}(x),|x|^c\psi_{\alpha,d}(x)\rangle<\infty$ for all $c<2(\alpha-1)$. 
  \end{SL}
\end{lem}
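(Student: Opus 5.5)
Claims a)–d) are already established in \cite[Appendix A]{HunJexLan22-Potential}, and we take them as given; in particular by c) the ground state for $\alpha>1$ is explicitly $\psi_{\alpha,d}(x)=(1+|x|^2)^{(2-d)/4-\alpha/2}$. The only new item is e), which we verify two ways: directly from the explicit formula, and via Theorems \ref{absence} and \ref{existence}, which also shows the bound $c<2(\alpha-1)$ is sharp.

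\emph{Direct computation.} For large $|x|$ we have $|\psi_{\alpha,d}(x)|^2\sim |x|^{2-d-2\alpha}$. Passing to polar coordinates,
\begin{equation*}
\int_{|x|>1}|x|^c|\psi_{\alpha,d}(x)|^2\,\mathrm dx\sim \int_1^\infty r^{c+2-d-2\alpha}\,r^{d-1}\,\mathrm dr=\int_1^\infty r^{c+1-2\alpha}\,\mathrm dr,
\end{equation*}
which is finite if and only if $c+1-2\alpha<-1$, that is, $c<2(\alpha-1)$. The contribution from $|x|\le 1$ is finite because $\psi_{\alpha,d}$ is bounded there. This already proves e), and it also shows divergence for $c\ge 2(\alpha-1)$.

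\emph{Consistency with the theorems.} Expanding $(1+|x|^2)^{-1}=|x|^{-2}-|x|^{-4}+\dots$ gives $V_{\alpha,d}(x)=\frac{4\alpha^2-(d-2)^2}{4|x|^2}+O(|x|^{-4})$. We compare this with the threshold $\frac{d(4-d)+4c+c^2}{4|x|^2}$, using $d(4-d)=4-(d-2)^2$. The existence condition \eqref{eq:thm_existence} with $m=0$ reads
\begin{equation*}
4\alpha^2>4+4c+c^2+2\varepsilon(c+2)\quad\text{for some }\varepsilon>0,
\end{equation*}
since the $O(|x|^{-4})$ error is absorbed for large $|x|$. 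This holds if and only if $(c+2)^2<4\alpha^2$, i.e.\ $c<2\alpha-2$. Criticality (from b)) and $\sigma(H)=\R^+$ (from a)) are known, and $V_{\alpha,d}$ is bounded and smooth, so Assumption \ref{potential} holds trivially. Theorem \ref{existence} then gives a zero-energy ground state with finite $c$-th moment, and by uniqueness of the positive ground state this state is $\psi_{\alpha,d}$.

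For $c\ge 2\alpha-2$, the $m=0$ bound of Theorem \ref{absence} holds for large $|x|$ when the inequality is strict. In the borderline case $c=2\alpha-2$ the $O(|x|^{-4})$ term is negative, so Theorem \ref{absence} still applies (for instance with $m=1$); in any case the direct computation settles it. The only step needing care is this sign check of the $|x|^{-4}$ term at equality, and that step is not needed for e) itself.
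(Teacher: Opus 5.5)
Your proof is correct, but your main argument differs from the paper's.

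The paper derives e) purely from Theorems \ref{absence} and \ref{existence}. It matches the $|x|^{-2}$ coefficient of $V_{\alpha,d}$ with the threshold $\frac{d(4-d)+4c+c^2}{4|x|^2}$, i.e.\ solves $4\alpha^2-(d-2)^2=d(4-d)+c^2+4c$. This gives $(c+2)^2=4\alpha^2$ and the critical value $c=2(\alpha-1)$.

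You instead integrate the explicit ground state from c) in polar coordinates. That route is more elementary and self-contained. It needs neither criticality, nor Assumption \ref{potential}, nor simplicity of the ground state. It also gives divergence for all $c\ge 2(\alpha-1)$, including the borderline case, without any subleading analysis. Near the origin you need $c>-d$. The paper's standing convention $c\ge 0$ covers this, and for $c\le -d$ the moment is in fact infinite.

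The paper's route buys something different. It uses only the leading asymptotics of the potential, so it illustrates the two theorems as the section intends. It would also apply to perturbations of $V_{\alpha,d}$ that have no explicit eigenfunction.

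Your second part supplies what the paper leaves implicit: absorbing the $O(|x|^{-4})$ remainder via strict inequality and small $\varepsilon$, and identifying the eigenfunction from Theorem \ref{existence} with $\psi_{\alpha,d}$.

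Your borderline remark is slightly tangled. With $m=0$ you genuinely need the sign of the $|x|^{-4}$ coefficient. That coefficient is $2-2\alpha^2-\alpha d-d<0$ for $\alpha>1$, so this works. With $m=1$, the positive term $\frac{c+2}{2|x|^2\ln|x|}$ absorbs any $O(|x|^{-4})$ remainder regardless of its sign.
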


\begin{proof}
The proof of claims \ref{claim 1}) to \ref{claim 4}) can be found in \cite[Appendix A]{HunJexLan22-Potential}. The proof of claim \ref{claim 5}) is a direct consequence of Theorems \ref{absence} and \ref{existence}. This can be seen by solving the equation $4\alpha^2-(d-2)^2=(4-d)d+c^2+4c$ for $c$. The equation follows by comparing the value of the prefactor in front of $\frac1{4|x|^2}$ in Theorems \ref{absence} and \ref{existence} with the prefactor  in front of $\frac1{4(|x|+1)^2}$ in \eqref{form:Vad}.
\end{proof}



\section*{Acknowledgments}
The author would like to thank Markus Lange for useful suggestions and proofreading the manuscript. The author received financial support from the Ministry of Education, Youth and Sport of the Czech Republic under the Grants No. RVO 14000.

\bibliographystyle{acm}
 \bibliography{references}

\end{document}